\documentclass[aps,prresearch,twocolumn,superscriptaddress,longbibliography,nofootinbib,floatfix]{revtex4-2}

\usepackage{amsmath,amssymb,amsthm,mathtools}
\usepackage{graphicx}
\usepackage{tikz}
\usetikzlibrary{positioning,arrows.meta,calc}
\usepackage[colorlinks=true,linkcolor=blue,citecolor=blue,urlcolor=blue]{hyperref}

\newtheorem{theorem}{Theorem}
\newtheorem{lemma}{Lemma}
\newtheorem{corollary}{Corollary}
\newtheorem{proposition}{Proposition}
\theoremstyle{remark}
\newtheorem{remark}{Remark}

\newcommand{\id}{\openone}
\newcommand{\Tr}{\mathrm{Tr}}
\newcommand{\Sym}{\mathrm{Sym}}

\DeclareMathOperator*{\eigmin}{\lambda_{\min}}
\DeclareMathOperator*{\eigmax}{\lambda_{\max}}

\begin{document}

\title{Analytic properties of cross-click operators in passive multi-basis photodetection:\\
monotonicity, exact convergence rates, and dimension reduction for quantum key distribution}

\author{Zhangdong Ye}
\email{yezhangdong@126.com}
\affiliation{Independent researcher, Ningbo 315100, China}

\date{July 5, 2026}

\begin{abstract}
In security analyses of quantum key distribution (QKD) and entanglement verification with realistic threshold detectors, \emph{cross-click operators}---the positive-operator-valued-measure (POVM) elements describing simultaneous clicks of detectors belonging to different measurement bases---play a central role: the growth of their minimum eigenvalue $f^{(n)}$ on the $n$-photon subspace is used to bound the weight of multiphoton contributions and thereby to reduce an infinite-dimensional optics problem to a low-dimensional one. In the works that introduced this technique, the required growth---monotonicity of $f^{(n)}$ in $n$ and convergence $f^{(n)}\to 1$---was \emph{verified numerically} on finitely many photon-number sectors, and while analytic subspace-population bounds have since been obtained for specific protocols and, recently, for generic passive setups within the flag-state squashing framework, a sharp analytic characterization of its spectral behavior has been lacking. Here we provide one. We show that for an arbitrary passive linear-optical analyzer with threshold detectors of arbitrary (possibly mismatched) efficiencies and dark-count probabilities, every ``silence'' operator is the second quantization $\Gamma(A)$ of an explicit single-photon contraction $A$, so that its restriction to the $n$-photon sector is the symmetric tensor power $A^{\otimes n}|_{\Sym^n}$. From this structure we prove: (i) $f^{(n+1)}\ge f^{(n)}$ for all $n$, via a simple operator inequality; (ii) two-sided exponential bounds $\max_b \gamma_b\lVert A_b\rVert^n \le 1-f^{(n)} \le \sum_b \gamma_b\lVert A_b\rVert^n$, which identify the \emph{exact} asymptotic convergence rate through an explicit single-photon spectral quantity, in closed form for the six-state polarization analyzer; (iii) for ideal detectors and $n\ge1$, the exact value $f^{(n)}=1-\sum_b p_b^{\,n}$; and (iv) an exact factorization $1-f^{(n_A,n_B)}=(1-f_A^{(n_A)})(1-f_B^{(n_B)})$ for the joint two-party cross-click operator, which reduces all bipartite properties to single-party ones. The proofs are dimension-agnostic and apply verbatim to polarization, time-bin, and spatial-mode analyzers with any number of bases. As an application we derive closed-form photon-number weight bounds of the type used in QKD security proofs with detection-efficiency mismatch, replacing large-scale Fock-space numerics by explicit formulas and making the associated security claims rigorous for \emph{all} photon numbers.
\end{abstract}

\maketitle

\section{Introduction}\label{sec:intro}

Security proofs of quantum key distribution (QKD) \cite{BB84,Renner2008,Scarani2009,Xu2020} must ultimately refer to the devices actually used in an experiment. On the detection side, practical receivers consist of passive linear optics followed by threshold single-photon detectors, whose efficiencies are neither unity nor identical. Detection-efficiency mismatch is not a technicality: it opens well-documented attacks \cite{Makarov2006,Lydersen2010,Sajeed2015}, and incorporating it into security proofs for realistic devices \cite{Lutkenhaus2000,GLLP2004} has been a long-standing effort \cite{Fung2009,Bochkov2019,Trushechkin2020,Zhang2021}; on the source side, the analogous imperfections are routinely handled by decoy-state methods \cite{Hwang2003,LoMaChen2005,WangXB2005}.

A structural difficulty is that optical receivers act on an infinite-dimensional Fock space, while the information-theoretic machinery of security proofs---in particular numerical approaches based on convex optimization \cite{Coles2016,Winick2018} and dimension-reduction techniques \cite{Upadhyaya2021}---requires finite-dimensional models. Squashing models \cite{Beaudry2008,Tsurumaru2008,Gittsovich2014} provide one route, but they do not exist for all receivers and typically fail in the presence of efficiency mismatch. An alternative route, introduced by Zhang and L\"utkenhaus in the context of entanglement verification \cite{Zhang2017} and subsequently used in QKD security proofs with mismatched detectors \cite{Zhang2021} and in normalized security-analysis frameworks for four--six-state protocols \cite{Tannous2019,YeThesis} and, in a related numerics-driven form, for quantum secure direct communication \cite{YeQSDC2021}, exploits \emph{cross clicks}: coincidence events in which detectors belonging to \emph{different} measurement bases fire simultaneously.

The logic is simple and powerful. Let $F_C$ denote the POVM element of a cross click and let
\begin{equation}
  f^{(n)} \;=\; \eigmin\!\big(F_C^{(n)}\big)
\end{equation}
be its minimum eigenvalue on the $n$-photon subspace. If $f^{(n)}$ grows with $n$ and approaches $1$, then a small observed cross-click probability $q_C$ certifies that the incoming light is predominantly supported on low photon numbers,
\begin{equation}\label{eq:weightlogic}
  q_C \;\ge\; \sum_n P_n f^{(n)} \;\ge\; f^{(N+1)} \sum_{n>N} P_n ,
\end{equation}
so that $\sum_{n\le N} P_n \ge 1 - q_C/f^{(N+1)}$. This converts a measurable coincidence rate into a rigorous bound on the weight of the low-photon-number block, which is exactly what dimension reduction requires.

The second inequality in Eq.~\eqref{eq:weightlogic} discards the sectors $n\le N$ using $f^{(n)}\ge0$ and then applies $f^{(n)}\ge f^{(N+1)}$ for $n>N$; its validity thus rests on the monotonicity (together with the nonnegativity) of $f^{(n)}$, while the convergence $f^{(n)}\to1$ is what makes the resulting bound strong at large cutoffs. In Refs.~\cite{Zhang2017,Zhang2021} and in the four--six-state framework of Ref.~\cite{YeThesis} these behaviors were established numerically, sector by sector---via semidefinite programming up to $20$ photons in Ref.~\cite{Zhang2017}, and by explicit diagonalization, with an accelerated construction of the mode-operator products, up to $60$ photons in Ref.~\cite{YeThesis}; indeed, Ref.~\cite{Zhang2017} explicitly posed an analytic proof of these monotonic behaviors as an open problem. Such computations are costly---exponentially so in the naive labeled-mode expansion---and they can never certify the property for \emph{all} $n$: a security statement that quantifies over all photon numbers cannot rest on a finite numerical check.

Analytic progress on closely related quantities has been made within the flag-state squashing framework. Li and L\"utkenhaus derived analytic subspace-population bounds from multi-click statistics for the unbalanced phase-encoded BB84 receiver \cite{LiLutkenhaus2020}, and Kamin and L\"utkenhaus recently obtained provably secure analytic \emph{lower bounds} on subspace populations for arbitrary passive linear-optical detection setups \cite{Kamin2024}; passive-detection imperfections also feature in recent phase-error-estimation analyses \cite{Tupkary2024,Wang2025}. These results bound the probability of suitably chosen detectable events from below on high-photon-number sectors, which suffices for flag-state squashing in prepare-and-measure protocols. What has been missing is an analytic characterization of the spectrum of the cross-click operator itself: a proof that its minimum eigenvalue---the quantity constrained in the entanglement-verification and joint two-party analyses of Refs.~\cite{Zhang2017,YeThesis}---is monotone for \emph{all} $n$, a determination of its exact asymptotic behavior with matching upper bounds, and a treatment of the genuinely bipartite joint cross-click operator. The physical intuition---many photons entering a passive multi-basis splitter will almost surely trigger detectors in more than one basis---strongly suggests that such a statement should exist.

The question addressed here is, in this sense, not a mathematical abstraction but one that arose from experimental and numerical studies of passive multi-basis receivers. In the six-state--four-state reference-frame-independent \cite{Laing2010} demonstration of Ref.~\cite{Tannous2019}, the detector efficiencies of the passive polarization analyzers were characterized as part of the analysis, making efficiency mismatch a concrete, measured device parameter rather than an idealization. It was in the author's subsequent doctoral work, completed in 2020 and reported in Ref.~\cite{YeThesis}, that the cross-click operators of precisely these receivers became the objects of the sector-by-sector numerical study described above; the resulting weight bounds, however, still rested on finite photon-number checks. The present work provides the missing analytic closure, and does so for a considerably more general class of passive analyzers.

Our starting point is the observation that every ``silence'' operator of a set of threshold detectors behind a passive network is the \emph{second quantization} $\Gamma(A)$ of an explicit contraction $A$ on the single-photon space, so that its $n$-photon block is a symmetric tensor power. All properties of the cross-click operator then follow from finite-dimensional matrix inequalities. Concretely, we prove the following, for an arbitrary passive analyzer with $m\ge 2$ detector groups (``bases''), arbitrary detector efficiencies $\eta_d\in(0,1]$, dark-count probabilities, internal losses, and an input space of arbitrary mode dimension:

\begin{enumerate}
\item[(R1)] \emph{Monotonicity} (Theorem~\ref{thm:mono}): $f^{(n+1)}\ge f^{(n)}$ for all $n\ge 0$.
\item[(R2)] \emph{Two-sided exponential bounds and the exact rate} (Theorem~\ref{thm:rate}): $\max_b\gamma_b\lVert A_b\rVert^n\le 1-f^{(n)}\le\sum_b\gamma_b\lVert A_b\rVert^n$, with $\lVert A_b\rVert$ an explicit single-photon spectral quantity, in closed form for the six-state analyzer; hence $\lim_n (1-f^{(n)})^{1/n}=\max_b\lVert A_b\rVert<1$ under a simple ``complementary coverage'' condition.
\item[(R3)] \emph{Exact values for ideal detectors} (Corollary~\ref{cor:ideal}): for $n\ge1$, $f^{(n)}=1-\sum_b p_b^{\,n}$, e.g. $f^{(n)}=1-3^{1-n}$ for the balanced passive six-state analyzer.
\item[(R4)] \emph{Exact factorization of the joint operator} (Theorem~\ref{thm:joint}): $1-f^{(n_A,n_B)}=(1-f_A^{(n_A)})(1-f_B^{(n_B)})$, which proves the bipartite monotonicity used in Refs.~\cite{Zhang2017,YeThesis} and explains, e.g., the observed degeneracy of the $n_B=0$ and $n_B=1$ curves.
\end{enumerate}

Beyond closing the logical gap, these results have practical consequences. The weight bounds entering security proofs become explicit finite-dimensional expressions---fully closed form for the six-state analyzer---eliminating both the computational cost and the truncation-error analysis of large Fock-space numerics; the exact convergence rate quantifies how the certification power of cross clicks degrades with efficiency mismatch; and the factorization (R4) collapses the two-party problem, which was previously handled by two-dimensional numerical scans, to two one-party curves.

The paper is organized as follows. Section~\ref{sec:setup} defines the detection model and the cross-click operator and establishes the second-quantization structure (Lemma~\ref{lem:gamma}). Section~\ref{sec:main} contains the main theorems with complete proofs. Section~\ref{sec:application} develops the application to photon-number weight bounds and dimension reduction in QKD. Section~\ref{sec:numerics} reports numerical illustrations and an independent Fock-space cross-check, together with a polynomial-time algorithm for exact sector-wise diagonalization. Section~\ref{sec:discussion} discusses extensions and open problems. Technical details are collected in the appendixes.

\section{Detection model and cross-click operators}\label{sec:setup}

\subsection{Passive multi-basis analyzers with threshold detectors}

\begin{figure}[tb]
\centering
\begin{tikzpicture}[scale=0.82, every node/.style={font=\scriptsize}]
  \draw[-{Latex[length=2mm]},thick] (-1.6,0) -- (-0.05,0) node[midway,above]{$\rho$};
  \draw[thick,fill=blue!8] (0,-0.45) rectangle (0.8,0.45);
  \node at (0.4,0) {\shortstack{$1{:}3$\\BS}};
  \foreach \y/\lab/\ba in {1.5/{$\mathbb{Z}$ basis (PBS $0^\circ$)}/Z, 0/{$\mathbb{X}$ basis (PBS $45^\circ$)}/X, -1.5/{$\mathbb{Y}$ basis (QWP+PBS)}/Y} {
    \draw[thick] (0.8,0.33*\y/1.5) .. controls (1.3,0.33*\y/1.5) and (1.5,\y) .. (2.0,\y);
    \draw[thick,fill=green!8] (2.0,\y-0.32) rectangle (3.6,\y+0.32);
    \node at (2.8,\y) {\lab};
  }
  \foreach \y/\da/\db in {1.5/{H}/{V}, 0/{D}/{A}, -1.5/{R}/{L}} {
    \draw[thick] (3.6,\y+0.16) -- (4.1,\y+0.35);
    \draw[thick] (3.6,\y-0.16) -- (4.1,\y-0.35);
    \draw[thick,fill=red!12] (4.1,\y+0.35) arc (90:-90:0.16) -- cycle;
    \draw[thick,fill=red!12] (4.1,\y-0.05) arc (90:-90:0.16) -- cycle;
    \node[right] at (4.35,\y+0.19) {$\eta_{\da},\varepsilon_{\da}$};
    \node[right] at (4.35,\y-0.51+0.3) {$\eta_{\db},\varepsilon_{\db}$};
  }
  \node[align=left] at (2.6,-2.6) {\emph{cross click}: detectors in $\ge 2$ different arms fire};
\end{tikzpicture}
\caption{Passive six-state polarization analyzer used as the running example. A polarization-independent $1{:}3$ splitter routes the incoming light to three polarization bases $\mathbb{Z}=\{H,V\}$, $\mathbb{X}=\{D,A\}$, $\mathbb{Y}=\{R,L\}$, each read out by two threshold detectors with efficiencies $\eta_d$ and dark-count probabilities $\varepsilon_d$. A \emph{cross click} is a coincidence between detectors belonging to different bases. All results in this paper hold for arbitrary passive analyzers of this type, with any number of input modes, detector groups, and detectors.}
\label{fig:setup}
\end{figure}

We consider the following general receiver, of which the passive six-state \cite{Bruss1998,BechmannPasquinucci1999} polarization analyzer of Fig.~\ref{fig:setup} is a special case (its six-state form is the device analyzed numerically in Ref.~\cite{YeThesis}; Ref.~\cite{Zhang2017} treated the analogous passive two-basis, four-detector analyzer). Light in $M$ optical modes, with single-photon Hilbert space $\mathfrak{h}\cong\mathbb{C}^M$, enters a passive (particle-number-conserving up to loss) linear-optical network terminating in $D$ threshold detectors labeled $d=1,\dots,D$. Detector $d$ has quantum efficiency $\eta_d\in(0,1]$ and dark-count probability $\varepsilon_d\in[0,1)$; as usual, the efficiency is modeled by a beam splitter of transmittance $\eta_d$ in front of an ideal threshold detector, and dark counts as independent classical clicks. Internal network losses are permitted and are absorbed into fictitious undetected loss modes, so that the network is described by an isometry $\widetilde{V}:\mathfrak{h}\to\mathfrak{h}_{\rm out}$ on the single-photon level, where $\mathfrak{h}_{\rm out}$ contains one mode per detector plus loss modes. We write $v_d^\dagger\in\mathfrak{h}^\ast$ for the row of $\widetilde V$ associated with detector mode $d$, i.e., $v_d\in\mathfrak{h}$ is the (sub-normalized) single-photon amplitude vector routed to detector $d$; isometry implies
\begin{equation}\label{eq:contraction}
  \sum_{d=1}^{D} v_d^{\phantom{\dagger}} v_d^\dagger \;\preceq\; \id_\mathfrak{h},
\end{equation}
with equality iff the network is lossless. For the balanced six-state analyzer, $M=2$, $D=6$, and $v_d=\sqrt{p_b}\,|e_d\rangle$ with splitting probability $p_b=\tfrac13$ and $|e_d\rangle\in\{|H\rangle,|V\rangle,|D\rangle,|A\rangle,|R\rangle,|L\rangle\}$.

The detectors are partitioned into $m\ge 2$ disjoint groups (``bases'')
\begin{equation}
  \{1,\dots,D\}=\mathbb{B}_1\sqcup\cdots\sqcup\mathbb{B}_m .
\end{equation}
For the six-state analyzer, $m=3$ and $\mathbb{B}_{\mathbb Z}=\{H,V\}$, $\mathbb{B}_{\mathbb X}=\{D,A\}$, $\mathbb{B}_{\mathbb Y}=\{R,L\}$.

Each run of the device produces a click pattern $S\subseteq\{1,\dots,D\}$, the set of detectors that fired (through photodetection or dark counts). Because threshold detectors respond only to the photon number arriving at them, all click-pattern POVM elements commute with the total input photon number $\hat n$; they are block diagonal in the photon-number decomposition $\mathcal{F}(\mathfrak{h})=\bigoplus_{n\ge0}\Sym^n(\mathfrak{h})$ of the input Fock space \cite{Zhang2017,YeThesis}. For an operator $X$ on $\mathcal{F}(\mathfrak{h})$ that is block diagonal in this sense we write $X^{(n)}$ for its block on the $n$-photon sector $\Sym^n(\mathfrak{h})$.

\subsection{Silence operators are second quantizations}

The elementary building blocks of our analysis are \emph{silence operators}: for a subset $S\subseteq\{1,\dots,D\}$, let $N_S$ denote the POVM element of the event ``no detector in $S$ clicks'' (detectors outside $S$ unconstrained). Recall that for a contraction $A$ on $\mathfrak h$ ($0\preceq A^\dagger A\preceq \id$; all our $A$'s will in fact be positive semidefinite), the \emph{second quantization} $\Gamma(A)$ is the operator on Fock space acting on the $n$-photon sector as the restricted tensor power \cite{BratteliRobinson},
\begin{equation}\label{eq:gammadef}
  \Gamma(A)\big|_{\Sym^n(\mathfrak h)} \;=\; A^{\otimes n}\big|_{\Sym^n(\mathfrak h)} .
\end{equation}
Physically, $\Gamma(A)$ ``applies $A$ to every photon independently.''

\begin{lemma}[Silence operators]\label{lem:gamma}
For any $S\subseteq\{1,\dots,D\}$,
\begin{equation}\label{eq:NS}
  N_S \;=\; \gamma_S\,\Gamma(A_S), \qquad
  \gamma_S=\!\prod_{d\in S}(1-\varepsilon_d), 
\end{equation}
\begin{equation}\label{eq:AS}
  A_S \;=\; \id_\mathfrak{h}-\sum_{d\in S}\eta_d\, v_d^{\phantom\dagger} v_d^\dagger,
  \qquad 0\preceq A_S\preceq \id_\mathfrak{h}.
\end{equation}
In particular, $N_S^{(n)}=\gamma_S\, A_S^{\otimes n}|_{\Sym^n(\mathfrak h)}$.
\end{lemma}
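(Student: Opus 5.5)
The plan is to separate the classical dark-count noise from the quantum photodetection part, and then show that the photodetection part of ``no click in $S$'' is a product of single-mode vacuum projections behind a passive network. Such a product is known to be a second quantization.

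\emph{Step 1 (dark counts).} Dark counts are independent classical events, independent of the optical input. A detector $d$ is silent iff it receives no registered photon and does not dark-count. Hence $N_S=\prod_{d\in S}(1-\varepsilon_d)\,N_S^{\rm ideal}$, where $N_S^{\rm ideal}$ is the silence operator for dark-count-free detectors. This gives the prefactor $\gamma_S$.

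\emph{Step 2 (efficiencies as extra splitters).} Each efficiency $\eta_d$ is a beam splitter in front of an ideal detector. Absorbing these splitters into the network gives a new isometry $W:\mathfrak h\to\mathfrak h'_{\rm out}$. In $W$, the amplitude routed to the ideal detector $d$ is $\sqrt{\eta_d}\,v_d$, and the remainder goes to loss modes. The event ``no ideal detector in $S$ clicks'' is the projector onto vacuum in the output modes $\{d\in S\}$, tensored with the identity elsewhere. Pulled back to the input, this gives $N_S^{\rm ideal}=\Gamma(W)^\dagger\,(P_{\rm vac}^{S}\otimes\id)\,\Gamma(W)$, where $\Gamma(W)$ is the Fock-space isometry induced by $W$. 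This identity holds because a passive network acts on $\Sym^n$ as $W^{\otimes n}$.

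\emph{Step 3 (the key identity).} Let $Q$ be the projector onto $\mathrm{span}\{e_d:d\in S\}^\perp$ in $\mathfrak h'_{\rm out}$. Then the vacuum projector on the modes in $S$, tensored with the identity on the others, equals $\Gamma(Q)$. On $\Sym^n(\mathfrak h'_{\rm out})$, $Q^{\otimes n}$ projects onto the states with all $n$ photons outside $S$, which is exactly the zero-occupation condition on those modes. Functoriality gives $\Gamma(W)^\dagger\Gamma(Q)\Gamma(W)=\Gamma(W^\dagger QW)$, checked sectorwise as $(W^{\otimes n})^\dagger Q^{\otimes n}W^{\otimes n}=(W^\dagger QW)^{\otimes n}$ restricted to $\Sym^n$. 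Here one uses that $W^{\otimes n}$ maps $\Sym^n(\mathfrak h)$ into $\Sym^n(\mathfrak h'_{\rm out})$ and that $Q^{\otimes n}$ preserves symmetric tensors. Then $W^\dagger QW=W^\dagger W-\sum_{d\in S}W^\dagger e_de_d^\dagger W=\id-\sum_{d\in S}\eta_d v_dv_d^\dagger=A_S$, using $W^\dagger W=\id$ and $W^\dagger e_d=\sqrt{\eta_d}\,v_d$.

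\emph{Step 4 (bounds on $A_S$).} Since $A_S=W^\dagger QW$ with $Q$ a projector and $W$ an isometry, we get $0\preceq A_S\preceq W^\dagger W=\id$. Alternatively, the upper bound is immediate and the lower bound follows from Eq.~\eqref{eq:contraction} with $\eta_d\le1$.

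The main obstacle is Step 3. One must carefully justify that the multimode vacuum projector on a subset of modes coincides with $\Gamma$ of a single-particle projector, including the bookkeeping between symmetric subspaces and full tensor powers. The cleanest route is to verify it on occupation-number (Fock) basis states, where both sides are diagonal.
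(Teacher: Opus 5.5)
Your proof is correct and follows essentially the paper's route. You factor off the dark counts, write the output-side silence event as the second quantization of a diagonal single-mode operator (checked on Fock basis states), and pull back through the network isometry via $\Gamma(W)^\dagger\Gamma(\cdot)\Gamma(W)=\Gamma(W^\dagger\,\cdot\,W)$.

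The only difference is that you absorb the efficiency splitters into the isometry, so the output operator is the projector $Q$ and $0\preceq A_S\preceq\id$ comes for free as a compression of a projector. The paper instead keeps the attenuation $\prod_{d\in S}(1-\eta_d)^{\hat n_d}=\Gamma_{\rm out}(D_S)$ on the original output modes and obtains $A_S\succeq0$ from Eq.~\eqref{eq:contraction}.
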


\begin{proof}
The no-click operator of a single threshold detector of efficiency $\eta$ monitoring output mode $d$ is $(1-\eta)^{\hat n_d}$: each arriving photon evades detection independently with probability $1-\eta$. Dark counts contribute the state-independent factor $(1-\varepsilon_d)$. Hence, on the output Fock space,
\begin{equation}
  N_S^{\rm out}=\gamma_S \prod_{d\in S}(1-\eta_d)^{\hat n_d}
  = \gamma_S\,\Gamma_{\rm out}(D_S),
\end{equation}
where $D_S$ is diagonal on $\mathfrak h_{\rm out}$ with entries $1-\eta_d$ for $d\in S$ and $1$ on all other detector and loss modes; the second equality is the standard identity $\prod_d x_d^{\hat n_d}=\Gamma(\bigoplus_d x_d)$, an immediate consequence of Eq.~\eqref{eq:gammadef} applied to product states of definite mode occupation. Pulling back through the network isometry $\widetilde V$, an $n$-photon input state $|\psi\rangle\in\Sym^n(\mathfrak h)$ is mapped to $\widetilde V^{\otimes n}|\psi\rangle\in\Sym^n(\mathfrak h_{\rm out})$, so
\begin{align}
  \langle\psi|N_S^{(n)}|\phi\rangle
  &=\gamma_S\,\langle\psi|(\widetilde V^\dagger)^{\otimes n} D_S^{\otimes n}\widetilde V^{\otimes n}|\phi\rangle \nonumber\\
  &=\gamma_S\,\langle\psi|\big(\widetilde V^\dagger D_S \widetilde V\big)^{\otimes n}|\phi\rangle ,
\end{align}
and $\widetilde V^\dagger D_S\widetilde V=\widetilde V^\dagger \widetilde V-\sum_{d\in S}\eta_d v_d^{\phantom\dagger}v_d^\dagger=A_S$ by isometry. Positivity $A_S\succeq 0$ follows from $\sum_{d\in S}\eta_d v_dv_d^\dagger\preceq\sum_d v_dv_d^\dagger\preceq\id$, Eq.~\eqref{eq:contraction}, and $A_S\preceq\id$ is manifest.
\end{proof}

Lemma~\ref{lem:gamma} is the structural heart of the paper: it says that every silence operator, restricted to $n$ photons, is a \emph{symmetric tensor power of an explicit $M\times M$ matrix}. All spectral questions about click-pattern operators thereby become questions about tensor powers of small matrices. The model also covers multimode detectors: if detector $d$ collects several output modes $\mu\in\mathcal M_d$ (temporal, spectral, or spatial), possibly with mode-dependent efficiencies $\eta_{d\mu}$, its silence operator is still a product of number-diagonal attenuations, and every statement below holds verbatim upon replacing $\eta_d v_d^{\phantom\dagger}v_d^\dagger$ by $\sum_{\mu\in\mathcal M_d}\eta_{d\mu}v_{d\mu}^{\phantom\dagger}v_{d\mu}^\dagger$; this covers, in particular, the spatial-mode-dependent mismatch models of Refs.~\cite{Sajeed2015,Zhang2017}.

\subsection{The cross-click operator}

Two families of silence operators matter. For each basis $b\in\{1,\dots,m\}$ define
\begin{align}
  Q_b &:= N_{\bar{\mathbb B}_b} = \gamma_b\,\Gamma(A_b), &
  A_b &= \id-\!\!\sum_{d\notin\mathbb{B}_b}\!\eta_d v_d^{\phantom\dagger}v_d^\dagger,
  \label{eq:Qb}\\
  F_\emptyset &:= N_{\{1,\dots,D\}} = \gamma_\emptyset\,\Gamma(A_\emptyset), &
  A_\emptyset &= \id-\sum_{d}\eta_d v_d^{\phantom\dagger}v_d^\dagger,
  \label{eq:Fempty}
\end{align}
with $\bar{\mathbb B}_b$ the complement of $\mathbb B_b$, $\gamma_b:=\gamma_{\bar{\mathbb B}_b}$ and $\gamma_\emptyset:=\prod_d(1-\varepsilon_d)$. Thus $Q_b$ is the probability operator of ``all clicks (if any) are confined to basis $b$'', and $F_\emptyset$ that of ``no clicks at all''. It is convenient to introduce the positive \emph{coverage operators}
\begin{equation}\label{eq:Rb}
  R_b:=\sum_{d\in\mathbb{B}_b}\eta_d\, v_d^{\phantom\dagger}v_d^\dagger\;\succeq\;0,
\end{equation}
in terms of which
\begin{equation}\label{eq:AbRb}
  \id - A_b=\sum_{b'\neq b}R_{b'},\qquad
  \id - A_\emptyset=\sum_{b'}R_{b'},\qquad
  A_b = A_\emptyset + R_b .
\end{equation}

A \emph{cross click} is the event that detectors in at least two different bases fire. Its POVM element is obtained by subtracting from the identity all (mutually exclusive) click patterns confined to a single basis. Since the events ``clicks confined to $b$'' for different $b$ overlap exactly in the no-click pattern, inclusion--exclusion gives
\begin{equation}\label{eq:FC}
  \boxed{\;F_C \;=\; \id-\sum_{b=1}^m Q_b + (m-1)\,F_\emptyset\;}
\end{equation}
For the six-state analyzer ($m=3$) this reproduces the definition used in Refs.~\cite{Zhang2017,YeThesis},
$F_C=\id-F_H-F_V-F_D-F_A-F_R-F_L-F_{HV}-F_{DA}-F_{RL}-F_\emptyset$, because $Q_{\mathbb Z}=F_\emptyset+F_H+F_V+F_{HV}$ etc.

We denote the complementary \emph{no-cross-click} operator by
\begin{equation}\label{eq:NC}
  N_C:=\id-F_C=\sum_b Q_b-(m-1)F_\emptyset ,
\end{equation}
which, being the probability operator of an event, satisfies $0\preceq N_C\preceq \id$, and whose $n$-photon block is, by Lemma~\ref{lem:gamma},
\begin{equation}\label{eq:NCn}
  N_C^{(n)}=\Big[\sum_b \gamma_b\,A_b^{\otimes n}-(m-1)\,\gamma_\emptyset\,A_\emptyset^{\otimes n}\Big]\Big|_{\Sym^n(\mathfrak h)} .
\end{equation}
The central quantity is
\begin{equation}\label{eq:fn}
  f^{(n)} := \eigmin\big(F_C^{(n)}\big)=1-\big\lVert N_C^{(n)}\big\rVert ,
\end{equation}
where $\lVert\cdot\rVert$ is the operator norm; the second equality uses $N_C^{(n)}\succeq0$.

\section{Main results}\label{sec:main}

Throughout this section, ``analyzer'' means any device of the class defined in Sec.~\ref{sec:setup}: arbitrary $M$, $D$, $m\ge2$, network (lossy or not), efficiencies $\eta_d\in(0,1]$, and dark-count probabilities $\varepsilon_d\in[0,1)$. We will repeatedly use the elementary fact that positive semidefinite ordering is preserved by tensor powers:

\begin{lemma}\label{lem:tensorder}
If $0\preceq X\preceq Y$ on $\mathfrak h$, then $0\preceq X^{\otimes n}\preceq Y^{\otimes n}$ on $\mathfrak h^{\otimes n}$ for every $n\ge1$.
\end{lemma}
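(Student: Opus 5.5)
The plan is a telescoping argument built on the fact that positive semidefinite ordering is preserved under tensor products with a fixed positive operator. Positivity $X^{\otimes n}\succeq 0$ is immediate, since tensor products of positive operators are positive. For example, write $X=B^\dagger B$; then $X^{\otimes n}=(B^{\otimes n})^\dagger B^{\otimes n}$.

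For the ordering I will use the identity
\begin{equation*}
  Y^{\otimes n}-X^{\otimes n}=\sum_{k=0}^{n-1} Y^{\otimes k}\otimes (Y-X)\otimes X^{\otimes (n-1-k)},
\end{equation*}
which telescopes. Each summand is a tensor product of the three positive semidefinite operators $Y^{\otimes k}$, $Y-X$ and $X^{\otimes(n-1-k)}$, the middle one being positive by hypothesis. Each summand is therefore positive, and so is the sum, giving $X^{\otimes n}\preceq Y^{\otimes n}$. An equivalent route is induction on $n$:
\begin{equation*}
  Y^{\otimes(n+1)}-X^{\otimes(n+1)}=(Y^{\otimes n}-X^{\otimes n})\otimes Y+X^{\otimes n}\otimes(Y-X),
\end{equation*}
where both terms on the right are positive by the induction hypothesis and the positivity of $Y$, $X$, $Y-X$.

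The only point that needs care is that ordering is \emph{not} preserved by ordinary products of non-commuting operators. The argument must therefore rely only on the fact that a tensor product of positive operators is positive, which holds because the spectrum of $P\otimes Q$ consists of the products of eigenvalues. It must never multiply $X$ and $Y$ as operators on the same factor. The same ordering then holds after restricting to $\Sym^n(\mathfrak h)$, since compressing to a subspace preserves $\preceq$.
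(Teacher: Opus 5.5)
Your proposal is correct and matches the paper's proof: the paper uses the same telescoping identity (indexed from $k=1$ to $n$). It likewise notes that each summand is a tensor product of positive semidefinite factors. Your inductive variant and the remark on compressing to $\Sym^n(\mathfrak h)$ are valid additions that the paper does not need here.
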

\begin{proof}
Telescoping, $Y^{\otimes n}-X^{\otimes n}=\sum_{k=1}^{n} Y^{\otimes(k-1)}\otimes(Y-X)\otimes X^{\otimes(n-k)}$, and each summand is a tensor product of positive semidefinite factors, hence positive semidefinite.
\end{proof}

Note that Lemma~\ref{lem:tensorder} is special to the \emph{positive semidefinite} setting; for general Hermitian operators, $X\preceq Y$ does not imply $X^{\otimes n}\preceq Y^{\otimes n}$. Since $A_\emptyset\succeq0$ and $A_b=A_\emptyset+R_b\succeq A_\emptyset$ by Eq.~\eqref{eq:AbRb}, Lemma~\ref{lem:tensorder} yields
\begin{equation}\label{eq:powerorder}
  0\;\preceq\;A_\emptyset^{\otimes n}\;\preceq\;A_b^{\otimes n}\qquad\text{for every } b,\ n .
\end{equation}
Together with $\gamma_b\ge\gamma_\emptyset$ this already shows $Q_b\succeq F_\emptyset$ sector by sector, so each term $Q_b-F_\emptyset$ in $N_C=F_\emptyset+\sum_b(Q_b-F_\emptyset)$ is positive---the operator-level counterpart of the fact that the underlying events are nested.

\subsection{Monotonicity}

\begin{theorem}[Monotonicity]\label{thm:mono}
For every analyzer and every $n\ge0$,
\begin{equation}
  f^{(n+1)}\;\ge\;f^{(n)} ,
\end{equation}
with the convention $A^{\otimes0}:=1$ on the one-dimensional vacuum sector, so that $N_C^{(0)}=\sum_b\gamma_b-(m-1)\gamma_\emptyset$ is a scalar.
\end{theorem}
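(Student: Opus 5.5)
The plan is to compare the $(n+1)$-photon no-cross-click block with the $n$-photon block through an operator inequality on the \emph{full} tensor space $\mathfrak h^{\otimes(n+1)}$, and only afterwards restrict to the symmetric subspace. Since $f^{(n)}=1-\lVert N_C^{(n)}\rVert$ by Eq.~\eqref{eq:fn}, it suffices to show $\lVert N_C^{(n+1)}\rVert\le\lVert N_C^{(n)}\rVert$.

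\emph{Step 1: an unrestricted operator.} Following Eq.~\eqref{eq:NCn}, I will define on $\mathfrak h^{\otimes n}$
\begin{equation*}
  T_n:=\sum_b\gamma_b A_b^{\otimes n}-(m-1)\gamma_\emptyset A_\emptyset^{\otimes n},
\end{equation*}
so that $N_C^{(n)}=T_n|_{\Sym^n(\mathfrak h)}$. Here $T_0$ is the scalar given in the statement. The operator $T_n$ commutes with permutations of the tensor factors.

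\emph{Step 2: the key inequality $T_{n+1}\preceq T_n\otimes\id$.} I will write $A^{\otimes(n+1)}=A^{\otimes n}\otimes A$ and substitute $A_b=A_\emptyset+R_b$ from Eq.~\eqref{eq:AbRb} in the last factor. This gives
\begin{equation*}
  T_{n+1}=T_n\otimes A_\emptyset+\sum_b\gamma_b A_b^{\otimes n}\otimes R_b .
\end{equation*}
Using $\id-A_\emptyset=\sum_b R_b$, I then obtain
\begin{equation*}
  T_n\otimes\id-T_{n+1}=\sum_b\big(T_n-\gamma_bA_b^{\otimes n}\big)\otimes R_b
  =\sum_b\sum_{b'\neq b}\big(\gamma_{b'}A_{b'}^{\otimes n}-\gamma_\emptyset A_\emptyset^{\otimes n}\big)\otimes R_b .
\end{equation*}
Each bracket is positive semidefinite by Eq.~\eqref{eq:powerorder} together with $\gamma_{b'}\ge\gamma_\emptyset$. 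This holds also for $n=0$, where the bracket is the scalar $\gamma_{b'}-\gamma_\emptyset\ge0$. Each $R_b$ is positive semidefinite, so every summand is a tensor product of positive semidefinite factors. Hence $T_{n+1}\preceq T_n\otimes\id$ on $\mathfrak h^{\otimes(n+1)}$.

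\emph{Step 3: restriction to $\Sym^{n+1}$.} We know $N_C^{(n+1)}\succeq0$, so its norm is $\sup_\psi\langle\psi|T_{n+1}|\psi\rangle$ over unit vectors $\psi\in\Sym^{n+1}(\mathfrak h)$. Any such $\psi$ lies in $\Sym^n(\mathfrak h)\otimes\mathfrak h$. On that subspace, $T_n\otimes\id$ acts as $N_C^{(n)}\otimes\id$, which is positive semidefinite with norm $\lVert N_C^{(n)}\rVert$. Therefore
\begin{equation*}
  \langle\psi|T_{n+1}|\psi\rangle\le\langle\psi|N_C^{(n)}\otimes\id|\psi\rangle\le\lVert N_C^{(n)}\rVert,
\end{equation*}
which gives $\lVert N_C^{(n+1)}\rVert\le\lVert N_C^{(n)}\rVert$ and thus $f^{(n+1)}\ge f^{(n)}$.

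The main obstacle is Step 2. A naive attempt to compare $A_b^{\otimes(n+1)}$ with $A_b^{\otimes n}\otimes\id$ term by term fails, because the term $-(m-1)\gamma_\emptyset A_\emptyset^{\otimes(n+1)}$ has the wrong sign. The plan is to expand only the last tensor factor around $A_\emptyset$. This regroups the expression into the nonnegative nested differences $\gamma_{b'}A_{b'}^{\otimes n}-\gamma_\emptyset A_\emptyset^{\otimes n}$, for which Lemma~\ref{lem:tensorder} applies. The restriction in Step 3 also relies on the positivity of $N_C^{(n)}$: without it, the norm and the top of the spectrum would not coincide.
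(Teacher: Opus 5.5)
Your proposal is correct and is essentially the paper's proof. It uses the same unrestricted operator on $\mathfrak h^{\otimes n}$, the same key inequality $T_{n+1}\preceq T_n\otimes\id$ with the identical nested-difference decomposition $\sum_b\big[\sum_{b'\neq b}(\gamma_{b'}A_{b'}^{\otimes n}-\gamma_\emptyset A_\emptyset^{\otimes n})\big]\otimes R_b$, and the same restriction step via $\Sym^{n+1}(\mathfrak h)\subset\Sym^n(\mathfrak h)\otimes\mathfrak h$ (you bound the quadratic form directly on that invariant subspace, where the paper passes to the equivalent reduced state $\rho=\Tr_{n+1}|\psi\rangle\langle\psi|$).
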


\begin{proof}
Define on the \emph{full} tensor space $\mathfrak h^{\otimes n}$ the operator
\begin{equation}
  \widetilde N^{(n)}:=\sum_b\gamma_b A_b^{\otimes n}-(m-1)\gamma_\emptyset A_\emptyset^{\otimes n},
\end{equation}
whose restriction to $\Sym^n(\mathfrak h)$ is $N_C^{(n)}$ by Eq.~\eqref{eq:NCn}. The key step is the operator inequality
\begin{equation}\label{eq:keyineq}
  \widetilde N^{(n+1)}\;\preceq\;\widetilde N^{(n)}\otimes\id_\mathfrak{h}.
\end{equation}
To prove it, compute the difference and use Eq.~\eqref{eq:AbRb}:
\begin{align}
  \widetilde N^{(n)}\otimes\id-\widetilde N^{(n+1)}
  &=\sum_b\gamma_b\,A_b^{\otimes n}\otimes(\id-A_b)\nonumber\\
  &\quad-(m-1)\gamma_\emptyset\,A_\emptyset^{\otimes n}\otimes(\id-A_\emptyset)\nonumber\\
  &=\sum_b\gamma_b\,A_b^{\otimes n}\otimes\!\sum_{b'\neq b}\!R_{b'}\nonumber\\
  &\quad-(m-1)\gamma_\emptyset\,A_\emptyset^{\otimes n}\otimes\!\sum_{b'}\!R_{b'}\nonumber\\
  &=\sum_{b'}\Big[\underbrace{\sum_{b\neq b'}\!\big(\gamma_b A_b^{\otimes n}-\gamma_\emptyset A_\emptyset^{\otimes n}\big)}_{\succeq\,0 \text{ by Eq.~\eqref{eq:powerorder}, }\gamma_b\ge\gamma_\emptyset}\Big]\otimes \underbrace{R_{b'}}_{\succeq 0}\nonumber\\
  &\succeq 0,
\end{align}
where in the last rewriting we used that the bracket contains exactly $m-1$ terms. This proves Eq.~\eqref{eq:keyineq}.

Now let $|\psi\rangle\in\Sym^{n+1}(\mathfrak h)$ be a unit vector achieving $\lVert N_C^{(n+1)}\rVert$. Since $\Sym^{n+1}(\mathfrak h)\subset\Sym^{n}(\mathfrak h)\otimes\mathfrak h$, the reduced state $\rho:=\Tr_{n+1}|\psi\rangle\langle\psi|$ (partial trace over the last tensor factor) is a density operator supported on $\Sym^n(\mathfrak h)$ [writing $|\psi\rangle=\sum_i|\varphi_i\rangle\otimes|e_i\rangle$ with $|\varphi_i\rangle\in\Sym^n(\mathfrak h)$ and $\{|e_i\rangle\}$ an orthonormal basis of $\mathfrak h$ gives $\rho=\sum_i|\varphi_i\rangle\langle\varphi_i|$]. Then
\begin{align}
  1-f^{(n+1)}&=\langle\psi|\widetilde N^{(n+1)}|\psi\rangle
  \;\le\;\langle\psi|\,\widetilde N^{(n)}\otimes\id\,|\psi\rangle\nonumber\\
  &=\Tr\big[\widetilde N^{(n)}\rho\big]
  =\Tr\big[N_C^{(n)}\rho\big]\nonumber\\
  &\le\;\big\lVert N_C^{(n)}\big\rVert=1-f^{(n)},
\end{align}
using Eq.~\eqref{eq:keyineq} in the first inequality, the support of $\rho$ in the middle equality, and $N_C^{(n)}\succeq0$ in the last inequality.
\end{proof}

\begin{remark}
The proof formalizes the physical intuition that ``removing one photon cannot create a cross click'': conditioning the $(n{+}1)$-photon no-cross-click probability on the fate of one photon can only be dominated by the best $n$-photon value. Note that no assumption on the polarization state, on efficiency symmetry, or on the splitting ratios entered anywhere; monotonicity is a structural property of passive analyzers with threshold detectors, including dark counts.
\end{remark}

\subsection{Two-sided bounds and the exact convergence rate}

\begin{theorem}[Exponential bounds]\label{thm:rate}
For every analyzer and every $n\ge1$,
\begin{equation}\label{eq:twosided}
  \max_b\;\gamma_b\,\lVert A_b\rVert^{\,n}
  \;\le\;
  1-f^{(n)}
  \;\le\;
  \sum_b\;\gamma_b\,\lVert A_b\rVert^{\,n},
\end{equation}
with
\begin{equation}\label{eq:normAb}
  \lVert A_b\rVert
  \;=\;1-\eigmin\Big(\sum_{b'\neq b}R_{b'}\Big)
  \;=\;1-\eigmin\Big(\sum_{d\notin\mathbb B_b}\eta_d\,v_d^{\phantom\dagger}v_d^\dagger\Big).
\end{equation}
Consequently
\begin{equation}\label{eq:rate}
  \lim_{n\to\infty}\big(1-f^{(n)}\big)^{1/n}\;=\;\max_b\,\lVert A_b\rVert,
\end{equation}
and $f^{(n)}\to1$ (exponentially) if and only if the \emph{complementary coverage condition}
\begin{equation}\label{eq:coverage}
  \sum_{b'\neq b} R_{b'} \;\succ\;0\qquad\text{for every }b
\end{equation}
holds, i.e., the detectors \emph{outside} any single basis jointly monitor every input mode with nonzero efficiency.
\end{theorem}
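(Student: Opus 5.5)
The proof works directly with the $n$-photon block $N_C^{(n)}$ of Eq.~\eqref{eq:NCn}, using $1-f^{(n)}=\lVert N_C^{(n)}\rVert$ from Eq.~\eqref{eq:fn}. The norm formula \eqref{eq:normAb} comes first. By Lemma~\ref{lem:gamma}, $0\preceq A_b\preceq\id$, and by Eq.~\eqref{eq:AbRb} we have $A_b=\id-\sum_{b'\neq b}R_{b'}$. The norm of a positive semidefinite matrix is its largest eigenvalue, so $\lVert A_b\rVert=1-\eigmin(\sum_{b'\neq b}R_{b'})$.

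For the \textbf{upper bound}, write $N_C^{(n)}=\sum_b\gamma_b A_b^{\otimes n}|_{\Sym^n}-(m-1)\gamma_\emptyset A_\emptyset^{\otimes n}|_{\Sym^n}$. The subtracted term is positive semidefinite because $A_\emptyset\succeq0$ (Lemma~\ref{lem:tensorder}), so $N_C^{(n)}\preceq\sum_b\gamma_b A_b^{\otimes n}|_{\Sym^n}$. Since $N_C^{(n)}\succeq0$, the triangle inequality and $\lVert A^{\otimes n}\rVert=\lVert A\rVert^n$ give $\lVert N_C^{(n)}\rVert\le\sum_b\gamma_b\lVert A_b\rVert^n$.

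For the \textbf{lower bound}, fix $b$ and let $u$ be a unit top eigenvector of $A_b$, so $A_bu=\lVert A_b\rVert u$. The test vector is the product state $u^{\otimes n}\in\Sym^n(\mathfrak h)$. Its expectation is
\[
\langle u^{\otimes n}|N_C^{(n)}|u^{\otimes n}\rangle=\sum_{c}\gamma_c\langle u|A_c|u\rangle^n-(m-1)\gamma_\emptyset\langle u|A_\emptyset|u\rangle^n .
\]
I would regroup this as $\gamma_b\langle u|A_b|u\rangle^n+\sum_{c\neq b}\big(\gamma_c\langle u|A_c|u\rangle^n-\gamma_\emptyset\langle u|A_\emptyset|u\rangle^n\big)$, which is possible because exactly $m-1$ indices $c\neq b$ remain. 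Each bracket is nonnegative, since $\gamma_c\ge\gamma_\emptyset$ and $\langle u|A_c|u\rangle\ge\langle u|A_\emptyset|u\rangle\ge0$ (from $A_c=A_\emptyset+R_c$). Hence $\lVert N_C^{(n)}\rVert\ge\gamma_b\lVert A_b\rVert^n$ for each $b$, and taking the maximum over $b$ gives the lower bound. This regrouping, which mirrors the one in the proof of Theorem~\ref{thm:mono}, is the only nontrivial step. The issue to watch is that the negative $F_\emptyset$ term cannot simply be discarded for a lower bound; pairing it with the other bases is what handles it.

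For the \textbf{rate}, assume $\gamma_b>0$, which holds because $\varepsilon_d<1$. Then both sides of \eqref{eq:twosided}, after taking $n$-th roots, tend to $\max_b\lVert A_b\rVert$: on the lower side $\gamma_b^{1/n}\to1$, and on the upper side $\sum_b\gamma_b\lVert A_b\rVert^n\le m\,\max_b\lVert A_b\rVert^n$. This gives Eq.~\eqref{eq:rate}; the degenerate case in which all $\lVert A_b\rVert=0$ is trivial. For the \textbf{convergence criterion}, if the coverage condition \eqref{eq:coverage} holds, then every $\lVert A_b\rVert<1$ and the upper bound decays exponentially. If it fails for some $b$, then $\lVert A_b\rVert=1$, so the lower bound gives $1-f^{(n)}\ge\gamma_b>0$ for all $n$, and $f^{(n)}\not\to1$.
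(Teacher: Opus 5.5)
Your proof is correct and follows the paper's own argument essentially step for step. The paper likewise gets the upper bound from $N_C^{(n)}\preceq\sum_b Q_b^{(n)}$, and the lower bound from the regrouping $N_C=Q_b+\sum_{b'\neq b}(Q_{b'}-F_\emptyset)\succeq Q_b$ evaluated on $u_b^{\otimes n}$; you perform the same regrouping on expectation values rather than as an operator inequality, and your explicit ``only if'' argument via $1-f^{(n)}\ge\gamma_b>0$ spells out a step the paper leaves implicit.
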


\begin{proof}
\emph{Lower bound.} By Eq.~\eqref{eq:powerorder} and $\gamma_{b'}\ge\gamma_\emptyset$ we have $N_C=Q_b+\sum_{b'\neq b}(Q_{b'}-F_\emptyset)\succeq Q_b$ sector by sector, for each $b$. Let $u_b$ be a normalized top eigenvector of $A_b$; then $u_b^{\otimes n}\in\Sym^n(\mathfrak h)$ and
\begin{equation}
  1-f^{(n)}\ge\langle u_b^{\otimes n}|Q_b^{(n)}|u_b^{\otimes n}\rangle=\gamma_b\lVert A_b\rVert^n .
\end{equation}

\emph{Upper bound.} $N_C\preceq\sum_bQ_b$ since $F_\emptyset\succeq0$, so by the triangle inequality and $\lVert A_b^{\otimes n}|_{\Sym^n}\rVert=\lVert A_b\rVert^n$ (the top eigenvector $u_b^{\otimes n}$ is symmetric),
\begin{equation}
  1-f^{(n)}\le\sum_b\gamma_b\lVert A_b\rVert^n .
\end{equation}
Equation~\eqref{eq:normAb} follows from Eq.~\eqref{eq:AbRb}, Eq.~\eqref{eq:rate} from the squeeze between the two bounds (the upper bound is at most $m\max_b\gamma_b\lVert A_b\rVert^n$), and the coverage criterion from $\lVert A_b\rVert<1\Leftrightarrow\sum_{b'\neq b}R_{b'}\succ0$.
\end{proof}

Theorem~\ref{thm:rate} shows that $1-f^{(n)}$ is pinned between two explicit exponentials that differ only by the constant factor $m$; the \emph{rate} of the exponential convergence---the operationally decisive quantity in the weight bounds of Sec.~\ref{sec:application}---is identified \emph{exactly}, through the explicit single-photon quantity \eqref{eq:normAb} [in closed form for the six-state analyzer, Eq.~\eqref{eq:sixrate}]. Physically, $\langle u|A_b|u\rangle$ is the probability that a single photon of polarization $u$ evades every detector outside basis $b$; a cross click can fail to occur only if each photon evades all bases but one, and $\lVert A_b\rVert^{n}$ is the corresponding best-case no-click weight.

Two closed-form specializations are worth recording.

\begin{corollary}[Ideal detectors]\label{cor:ideal}
Suppose the network is lossless, $\varepsilon_d=0$, $\eta_d=1$ for all $d$, and each basis resolves a complete measurement, $\sum_{d\in\mathbb B_b}v_dv_d^\dagger=p_b\id$ with splitting probabilities $p_b>0$, $\sum_bp_b=1$. Then $A_b=p_b\id$, $A_\emptyset=0$, and for every $n\ge1$ the $n$-photon block is exactly a multiple of the identity,
\begin{equation}\label{eq:ideal}
  N_C^{(n)}=\Big(\sum_b p_b^{\,n}\Big)\id,
  \qquad
  f^{(n)}=1-\sum_{b=1}^m p_b^{\,n} .
\end{equation}
For the balanced passive six-state analyzer, $f^{(n)}=1-3^{\,1-n}$.
\end{corollary}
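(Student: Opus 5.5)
The plan is to specialize Lemma~\ref{lem:gamma} and Eq.~\eqref{eq:NCn} directly, so that no spectral estimate is needed at all.

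First I compute the single-photon operators. Ideal detectors give $\eta_d=1$ and $\varepsilon_d=0$. Hence $\gamma_b=\gamma_\emptyset=1$ and $R_b=\sum_{d\in\mathbb B_b}v_dv_d^\dagger=p_b\id$. Losslessness is not even needed for this step, but it guarantees consistency with Eq.~\eqref{eq:contraction}: there we get $\sum_b R_b=(\sum_b p_b)\id=\id$, which is the equality case. Equation~\eqref{eq:AbRb} then gives
\begin{equation*}
  A_\emptyset=\id-\sum_{b'}R_{b'}=0,\qquad A_b=A_\emptyset+R_b=p_b\id .
\end{equation*}

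Second, I substitute into Eq.~\eqref{eq:NCn}. For $n\ge1$ we have $A_\emptyset^{\otimes n}=0$ and $A_b^{\otimes n}=p_b^{\,n}\id_{\mathfrak h^{\otimes n}}$. Restricting to $\Sym^n(\mathfrak h)$ then gives $N_C^{(n)}=(\sum_b p_b^{\,n})\id$. Its operator norm is $\sum_b p_b^{\,n}$, and Eq.~\eqref{eq:fn} yields $f^{(n)}=1-\sum_b p_b^{\,n}$. The balanced six-state case has $m=3$ and $p_b=\tfrac13$, so $\sum_b p_b^{\,n}=3\cdot3^{-n}=3^{1-n}$.

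The only step needing care is the convention at $n=0$, which explains the restriction $n\ge1$. On the vacuum sector $A_\emptyset^{\otimes0}=1\neq0$, so the formula changes to $N_C^{(0)}=m-(m-1)=1$. The statement excludes this case, and I would just remark on it. As a consistency check, the two-sided bounds of Theorem~\ref{thm:rate} read $\max_b p_b^n\le\sum_b p_b^n\le\sum_b p_b^n$ and are compatible. Monotonicity (Theorem~\ref{thm:mono}) is also visible directly, since $p_b\le1$. I expect no genuine obstacle: the argument is pure substitution once Lemma~\ref{lem:gamma} is available.
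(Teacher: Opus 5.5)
Your proposal is correct and matches the paper's proof: the paper also obtains $A_b=p_b\id$ and $A_\emptyset=0$ from Eq.~\eqref{eq:AbRb}, then reads off $N_C^{(n)}$ from Eq.~\eqref{eq:NCn} using $A_\emptyset^{\otimes n}=0$ for $n\ge1$. Your side remarks on the vacuum sector ($N_C^{(0)}=1$), on losslessness being implied by the other hypotheses, and on consistency with Theorems~\ref{thm:mono} and~\ref{thm:rate} are accurate but not needed for the proof.
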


\begin{proof}
Immediate from Eqs.~\eqref{eq:AbRb} and \eqref{eq:NCn}; $\sum_bp_b=1$ gives $A_b=\id-\sum_{b'\ne b}p_{b'}\id=p_b\id$, and $A_\emptyset^{\otimes n}=0$ for $n\ge1$.
\end{proof}

Note $f^{(1)}=0$, as it must be: a single photon can trigger at most one detector. Corollary~\ref{cor:ideal} also exposes the \emph{mechanism} behind Eq.~\eqref{eq:ideal}: with ideal detectors the no-cross-click probability of any $n$-photon state is simply the (state-independent) probability $\sum_b p_b^n$ that the polarization-independent splitter routes all $n$ photons into the same arm.

\begin{corollary}[Six-state analyzer with mismatched efficiencies]\label{cor:sixstate}
For the balanced ($p_b=\tfrac13$) six-state polarization analyzer with efficiencies $\eta_{b1},\eta_{b2}$ in basis $b$ and no internal loss,
\begin{equation}\label{eq:sixrate}
  \lVert A_b\rVert=1-\frac{1}{3}\!\!\sum_{b'\neq b}\!\bar\eta_{b'}
  +\frac{1}{6}\sqrt{\sum_{b'\neq b}\big(\eta_{b'1}-\eta_{b'2}\big)^{2}},
\end{equation}
where $\bar\eta_{b'}=(\eta_{b'1}+\eta_{b'2})/2$.
\end{corollary}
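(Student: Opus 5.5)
We compute $\lVert A_b\rVert$ through Eq.~\eqref{eq:normAb}: since $A_b=\id-\sum_{b'\neq b}R_{b'}$ and $A_b\succeq0$, we have $\lVert A_b\rVert=1-\eigmin\big(\sum_{b'\neq b}R_{b'}\big)$. The task therefore reduces to finding the smallest eigenvalue of a $2\times2$ positive matrix built from the coverage operators. To evaluate it, we use the Bloch (Pauli) representation of qubit operators. For basis $b'$ with detectors along the antipodal states $|e_{b'1}\rangle,|e_{b'2}\rangle$ (unit Bloch vectors $\pm\mathbf n_{b'}$), the projectors are $|e_{b'1,2}\rangle\langle e_{b'1,2}|=\tfrac12(\id\pm\mathbf n_{b'}\cdot\boldsymbol\sigma)$. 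With $v_d=\sqrt{1/3}\,|e_d\rangle$ this gives
\begin{equation*}
R_{b'}=\tfrac13\Big[\bar\eta_{b'}\,\id+\tfrac12(\eta_{b'1}-\eta_{b'2})\,\mathbf n_{b'}\cdot\boldsymbol\sigma\Big].
\end{equation*}

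Summing over $b'\neq b$ yields $\sum_{b'\neq b}R_{b'}=\alpha\,\id+\mathbf r\cdot\boldsymbol\sigma$, where
\begin{equation*}
\alpha=\tfrac13\sum_{b'\neq b}\bar\eta_{b'},\qquad \mathbf r=\tfrac16\sum_{b'\neq b}(\eta_{b'1}-\eta_{b'2})\,\mathbf n_{b'}.
\end{equation*}
The eigenvalues of $\alpha\id+\mathbf r\cdot\boldsymbol\sigma$ are $\alpha\pm|\mathbf r|$, so $\eigmin=\alpha-|\mathbf r|$. The key geometric input is that the Bloch vectors $\mathbf n_{\mathbb Z},\mathbf n_{\mathbb X},\mathbf n_{\mathbb Y}$ of the three mutually unbiased polarization bases are mutually orthogonal. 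Hence $|\mathbf r|=\tfrac16\sqrt{\sum_{b'\neq b}(\eta_{b'1}-\eta_{b'2})^2}$ by Pythagoras, and substituting gives Eq.~\eqref{eq:sixrate}.

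The only delicate step is bookkeeping rather than analysis: fixing the normalization, namely that $p_b=\tfrac13$ multiplies each projector, and the factor $\tfrac12$ from the Bloch decomposition combine into the coefficient $\tfrac16$. We must also confirm orthogonality of the Bloch axes, which follows from $|\langle e_{b1}|e_{b'1}\rangle|^2=\tfrac12$ for $b\neq b'$, equivalently $\mathbf n_b\cdot\mathbf n_{b'}=0$. Sign conventions for which detector is labelled ``1'' are irrelevant, since only squares of the differences enter. Positivity of $A_b$, needed to identify $\lVert A_b\rVert$ with its top eigenvalue $1-\eigmin$, is already guaranteed by Lemma~\ref{lem:gamma}, because the network is lossless and $\eta_d\le1$.
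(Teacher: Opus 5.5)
Your proposal is correct and matches the paper's proof in Appendix~\ref{app:sixstate} step by step. Both write each $R_{b'}$ in Bloch form as $\tfrac13\big(\bar\eta_{b'}\,\id+\tfrac12(\eta_{b'1}-\eta_{b'2})\,\sigma_{b'}\big)$, sum over $b'\neq b$, use orthogonality of the Pauli axes to get $\lambda_{\min}=\alpha-|\mathbf r|$, and apply Eq.~\eqref{eq:normAb} (valid because $A_b\succeq 0$) to obtain $\lVert A_b\rVert$.
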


\begin{proof}
See Appendix~\ref{app:sixstate}: the three basis projectors define mutually orthogonal Pauli axes on the Bloch sphere, so $\sum_{b'\neq b}R_{b'}=c\,\id+\vec r\cdot\vec\sigma$ with $c=\frac13\sum_{b'\neq b}\bar\eta_{b'}$ and $|\vec r|=\frac16[\sum_{b'\neq b}(\Delta\eta_{b'})^2]^{1/2}$.
\end{proof}

Equation~\eqref{eq:sixrate} makes the effect of efficiency mismatch on the certification power of cross clicks completely explicit: mean efficiency loss enters through $\bar\eta$, mismatch through the Euclidean norm of the per-basis imbalances, and the two contributions are additive at the level of the rate.

\subsection{The joint two-party operator}

In entanglement verification and entanglement-based QKD \cite{Ekert1991,BBM92}, both Alice and Bob operate analyzers of the above type, and the relevant object---used in the one-sided form $\id_A\otimes F_C^B$ in Ref.~\cite{Zhang2017} and in the two-sided form below in Ref.~\cite{YeThesis}---is the \emph{joint cross-click operator}
\begin{equation}\label{eq:CC}
  C_C:=\id_{AB}-\big(\id_A-F_C^A\big)\otimes\big(\id_B-F_C^B\big)
  =\id_{AB}-N_C^A\otimes N_C^B,
\end{equation}
the probability operator of ``at least one party registers a cross click''. Its restriction to the $(n_A,n_B)$-photon sector defines
\begin{equation}
  f^{(n_A,n_B)}:=\eigmin\big(C_C^{(n_A,n_B)}\big).
\end{equation}

\begin{theorem}[Exact factorization]\label{thm:joint}
For all $n_A,n_B\ge0$,
\begin{equation}\label{eq:factorization}
  1-f^{(n_A,n_B)}\;=\;\big(1-f_A^{(n_A)}\big)\big(1-f_B^{(n_B)}\big).
\end{equation}
Consequently, $f^{(n_A,n_B)}$ is (i) nondecreasing in $n_A$ and in $n_B$ separately, (ii) jointly convergent to $1$ whenever either party's analyzer satisfies the coverage condition \eqref{eq:coverage}, and (iii) computable from the two single-party curves. The same holds, mutatis mutandis, for $K$ parties: $1-f^{(n_1,\dots,n_K)}=\prod_k(1-f_k^{(n_k)})$.
\end{theorem}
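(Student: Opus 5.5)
The plan is to reduce everything to the spectral behaviour of a tensor product of two positive semidefinite operators. All click-pattern operators of Alice act on Alice's Fock space and are block diagonal in $n_A$; likewise for Bob. Hence $N_C^A\otimes N_C^B$ is block diagonal in the pair $(n_A,n_B)$. Its $(n_A,n_B)$ block is
\begin{equation*}
  \big(N_C^A\otimes N_C^B\big)^{(n_A,n_B)}=N_C^{A,(n_A)}\otimes N_C^{B,(n_B)}
\end{equation*}
on $\Sym^{n_A}(\mathfrak h_A)\otimes\Sym^{n_B}(\mathfrak h_B)$. Consequently $C_C^{(n_A,n_B)}=\id-N_C^{A,(n_A)}\otimes N_C^{B,(n_B)}$, and
\begin{equation*}
  f^{(n_A,n_B)}=1-\eigmax\big(N_C^{A,(n_A)}\otimes N_C^{B,(n_B)}\big).
\end{equation*}

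The key step is that the spectrum of a tensor product of Hermitian operators consists of all pairwise products of eigenvalues. Both factors are positive semidefinite, since $0\preceq N_C\preceq\id$ as the probability operator of an event. The largest product is therefore the product of the two largest eigenvalues, $\lVert N_C^{A,(n_A)}\rVert\,\lVert N_C^{B,(n_B)}\rVert$. This is exactly the point where positivity matters: with a negative eigenvalue, the product of the two most negative eigenvalues could dominate. Combined with Eq.~\eqref{eq:fn}, $\lVert N_C^{(n)}\rVert=1-f^{(n)}$, this gives Eq.~\eqref{eq:factorization} at once. A maximizer is simply the product of the two single-party top eigenvectors. The vacuum sectors are covered by the convention of Theorem~\ref{thm:mono}, under which $N_C^{(0)}$ is a nonnegative scalar.

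The consequences then follow directly.
\begin{itemize}
\item[(i)] By Theorem~\ref{thm:mono}, each factor $1-f_k^{(n_k)}$ is nonincreasing in $n_k$ and lies in $[0,1]$. The product is therefore nonincreasing in each argument separately.
\item[(ii)] If, say, Alice satisfies Eq.~\eqref{eq:coverage}, Theorem~\ref{thm:rate} gives $1-f_A^{(n_A)}\to0$ exponentially. Since the other factor is at most $1$, the product tends to $0$ as $n_A\to\infty$, uniformly in $n_B$.
\item[(iii)] Computability from the two single-party curves is immediate from the factorization.
\end{itemize}

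For $K$ parties, the same argument applies by induction on $K$, using the same spectral fact for $\bigotimes_k N_C^{k,(n_k)}$ with all factors positive semidefinite.

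I expect no real obstacle here. The only points needing care are, first, checking that the sector restriction of the tensor product is the tensor product of the sector restrictions, which holds because the photon-number projectors factorize as $P_{n_A}\otimes P_{n_B}$. Second, one must state explicitly that positivity of $N_C^{(n)}$ is what makes the top eigenvalue of the product equal to the product of the norms.
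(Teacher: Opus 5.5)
Your proposal is correct and follows the paper's proof: you reduce the $(n_A,n_B)$ block of $C_C$ to $\id-N_C^{A,(n_A)}\otimes N_C^{B,(n_B)}$, use positivity of both factors to identify the top eigenvalue of the product with the product of the norms, and obtain (i)--(iii) and the $K$-party case from Theorems~\ref{thm:mono} and~\ref{thm:rate}. Your framing is slightly sharper than the paper's. Norm multiplicativity on tensor products holds for arbitrary operators; what positivity actually secures is that $\lambda_{\max}$ of the product equals the product of the norms, which is exactly where you locate it.
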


\begin{proof}
$N_C^A\otimes N_C^B$ is a tensor product of positive semidefinite operators, and the operator norm is multiplicative on tensor products of positive operators: $\lVert X\otimes Y\rVert=\lVert X\rVert\,\lVert Y\rVert$. Hence
$1-f^{(n_A,n_B)}=\lVert (N_C^A)^{(n_A)}\otimes (N_C^B)^{(n_B)}\rVert
=\lVert (N_C^A)^{(n_A)}\rVert\,\lVert (N_C^B)^{(n_B)}\rVert$.
Claims (i)--(iii) follow from Theorems~\ref{thm:mono} and \ref{thm:rate}.
\end{proof}

Theorem~\ref{thm:joint} replaces the two-dimensional numerical scans of Ref.~\cite{YeThesis} by a product of two one-dimensional curves, each of which is itself pinned down by Theorem~\ref{thm:rate}. It also explains, as an exact statement rather than a numerical observation, the degeneracies visible in those scans: since a single photon cannot produce a cross click, $f_B^{(0)}=f_B^{(1)}=0$ (absent dark counts), hence
\begin{equation}
  f^{(n_A,0)}=f^{(n_A,1)}=f_A^{(n_A)},
\end{equation}
i.e., the $n_B=0$ and $n_B=1$ curves coincide identically.

\begin{remark}[Relation to PPT-constrained quantities]
Ref.~\cite{Zhang2017} works with the constrained minimum $c^{(n)}_{\min}=\min\Tr[\rho_{AB}\,\id_A\otimes G_C]$ over bipartite states with positive partial transpose (PPT). For this one-sided operator, however, the unconstrained minimum $f^{(n)}$ is attained on product states, which are PPT; hence $c^{(n)}_{\min}=f^{(n)}$ exactly, and Theorems~\ref{thm:mono} and \ref{thm:rate} settle the monotonicity and convergence of the cross-click quantity of Ref.~\cite{Zhang2017}---posed there explicitly as an open problem---rather than merely bounding it. The double-click and effective-error quantities of the active-detection scheme in that work have a different structure and are not covered by the present analysis.
\end{remark}

\section{Application: closed-form photon-number weight bounds}\label{sec:application}

We now spell out how the theorems upgrade the security-proof step in which cross clicks are used, following the protocol structure of Refs.~\cite{Zhang2021,YeThesis} (entanglement-based four--six-state QKD with passive detection, as demonstrated experimentally in Ref.~\cite{Tannous2019}): an untrusted source distributes bipartite states; Alice and Bob measure with characterized passive analyzers; announced outcomes include the cross-click flag. Because all POVM elements are photon-number block diagonal, dephasing $\rho_{AB}$ in local photon number changes no observed statistic; for the weight-bound step considered here one may therefore take the shared state block diagonal \cite{Zhang2017,YeThesis}
\begin{equation}
  \rho_{AB}=\bigoplus_{n_A,n_B\ge0}P_{n_A,n_B}\,\rho^{(n_A,n_B)},
\end{equation}
with $\{P_{n_A,n_B}\}$ an unknown probability distribution over photon-number sectors. The observed joint cross-click probability is
\begin{equation}
  q_C=\Tr[\rho_{AB}C_C]=\sum_{n_A,n_B}P_{n_A,n_B}\Tr\big[\rho^{(n_A,n_B)}C_C^{(n_A,n_B)}\big] .
\end{equation}

\begin{proposition}[Low-photon-number block weight]\label{prop:weight}
Let $P_{\le1,\le1}:=\sum_{n_A\le1,n_B\le1}P_{n_A,n_B}$ be the weight of the block with at most one photon on each side, and suppose $\min\{f_A^{(2)},f_B^{(2)}\}>0$ (which holds, e.g., for splitter-type analyzers with all $\eta_d>0$; see Appendix~\ref{app:smalln}). Then
\begin{equation}\label{eq:qubitweight}
  P_{\le1,\le1}\;\ge\;1-\frac{q_C}{\min\{f_A^{(2)},\,f_B^{(2)}\}},
\end{equation}
and each $f^{(2)}$ is bounded explicitly by Theorem~\ref{thm:rate},
\begin{equation}\label{eq:f2bound}
  f^{(2)}\;\ge\;1-\sum_b\gamma_b\lVert A_b\rVert^{2},
\end{equation}
with $\lVert A_b\rVert$ from Eq.~\eqref{eq:normAb} [Eq.~\eqref{eq:sixrate} for the six-state analyzer]; alternatively, $f^{(2)}=1-\eigmax\big(N_C^{(2)}\big)$ is obtained by diagonalizing the explicit matrix \eqref{eq:NCn} of dimension $\dim\Sym^2(\mathbb{C}^M)=M(M{+}1)/2$ [$3\times3$ for polarization qubits]. The surrogate \eqref{eq:f2bound} is informative only when its right-hand side is positive; otherwise one uses the exact $f^{(2)}$, whose strict positivity for splitter-type analyzers is established in Appendix~\ref{app:smalln}.
\end{proposition}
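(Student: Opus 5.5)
The plan is to run the weight logic of Eq.~\eqref{eq:weightlogic} sector by sector, with Theorem~\ref{thm:joint} supplying the per-sector eigenvalue bound and Theorem~\ref{thm:mono} supplying the uniform lower bound outside the low-photon-number block. First, since $C_C\succeq0$ (it is the probability operator of an event), every term in $q_C=\sum P_{n_A,n_B}\Tr[\rho^{(n_A,n_B)}C_C^{(n_A,n_B)}]$ is nonnegative. Each term is at least $P_{n_A,n_B}f^{(n_A,n_B)}$, because $\Tr[\rho X]\ge\eigmin(X)$ for a density operator $\rho$. Hence
\begin{equation*}
q_C\;\ge\;\sum_{(n_A,n_B)\notin\{0,1\}^2}P_{n_A,n_B}\,f^{(n_A,n_B)} ,
\end{equation*}
where the sectors with $n_A\le1$ and $n_B\le1$ have been dropped using $f^{(n_A,n_B)}\ge0$.

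Second, I would bound $f^{(n_A,n_B)}$ uniformly on the complement of the low block. By Theorem~\ref{thm:joint}, $1-f^{(n_A,n_B)}=(1-f_A^{(n_A)})(1-f_B^{(n_B)})$, and both factors lie in $[0,1]$ because $0\preceq N_C\preceq\id$. Outside the low block, either $n_A\ge2$ or $n_B\ge2$. Suppose $n_A\ge2$. Monotonicity (Theorem~\ref{thm:mono}) gives $1-f_A^{(n_A)}\le1-f_A^{(2)}$, and bounding the $B$ factor by $1$ yields $1-f^{(n_A,n_B)}\le 1-f_A^{(2)}$. The case $n_B\ge2$ is symmetric. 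In either case $f^{(n_A,n_B)}\ge\min\{f_A^{(2)},f_B^{(2)}\}=:f_\ast$. This gives $q_C\ge f_\ast(1-P_{\le1,\le1})$, and dividing by $f_\ast>0$ yields Eq.~\eqref{eq:qubitweight}.

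Third, Eq.~\eqref{eq:f2bound} is just the upper bound of Theorem~\ref{thm:rate} at $n=2$, rewritten via $f^{(2)}=1-\lVert N_C^{(2)}\rVert$. The claim that $f^{(2)}$ equals $1-\eigmax$ of an explicit $M(M+1)/2$-dimensional matrix follows from Eq.~\eqref{eq:NCn} restricted to $\Sym^2(\mathfrak h)$.

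The main obstacle is not this chain, which is routine. It is the hypothesis $\min\{f^{(2)}_A,f^{(2)}_B\}>0$, whose verification for splitter-type analyzers the statement defers to Appendix~\ref{app:smalln}; I would take it as given here. If it had to be checked, I would show that no two-photon symmetric state is annihilated by $F_C^{(2)}$. A candidate argument is that each photon reaches every basis with positive amplitude, so the probability that two photons land in different bases and both click is strictly positive for all states; compactness of the unit sphere in $\Sym^2$ then gives a positive minimum.
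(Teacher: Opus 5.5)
Your proof is correct and follows the paper's argument essentially step for step. You discard the retained block using $f^{(n_A,n_B)}\ge0$. You bound $f^{(n_A,n_B)}\ge\min\{f_A^{(2)},f_B^{(2)}\}$ on the complement by combining the factorization of Theorem~\ref{thm:joint} (both factors in $[0,1]$) with the monotonicity of Theorem~\ref{thm:mono}. You then read off Eq.~\eqref{eq:f2bound} as the $n=2$ case of the upper bound in Theorem~\ref{thm:rate}. Your positivity sketch for $f^{(2)}$ is not needed, since it is a hypothesis of the proposition, but it is sound and in the spirit of Appendix~\ref{app:smalln}.
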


\begin{proof}
Outside this block, $n_A\ge2$ or $n_B\ge2$. By Theorems~\ref{thm:mono} and \ref{thm:joint}, $f^{(n_A,n_B)}\ge\min\{f_A^{(2)},f_B^{(2)}\}$ on all such sectors (monotonicity in each argument, and $1-f^{(n_A,n_B)}\le 1-f_A^{(n_A)}\le 1-f_A^{(2)}$ when $n_A\ge2$, similarly for $B$). Dropping the nonnegative contribution of the retained block,
\begin{align}
  q_C&\ge\sum_{(n_A,n_B)\notin\{0,1\}^2}\!\!P_{n_A,n_B}f^{(n_A,n_B)}\nonumber\\
  &\ge(1-P_{\le1,\le1})\min\{f_A^{(2)},f_B^{(2)}\}. \qedhere
\end{align}
\end{proof}

More generally, for any truncation thresholds $(N_A,N_B)$,
\begin{equation}\label{eq:generalweight}
  \sum_{n_A\le N_A,\,n_B\le N_B}\!\!\!\!P_{n_A,n_B}
  \;\ge\;1-\frac{q_C}{\min\{f_A^{(N_A+1)},\,f_B^{(N_B+1)}\}},
\end{equation}
valid whenever the denominator is positive, with $f^{(N+1)}\ge1-\sum_b\gamma_b\lVert A_b\rVert^{N+1}$ in explicit form; by Theorem~\ref{thm:rate} the closed-form surrogate is positive for all sufficiently large $N_A,N_B$ under the coverage condition \eqref{eq:coverage}. Such weight bounds are precisely the input required by dimension-reduction methods for QKD \cite{Upadhyaya2021} and by the numerical key-rate framework of Refs.~\cite{Coles2016,Winick2018}, where the infinite-dimensional optimization is replaced by an optimization over the retained block plus a penalty controlled by the discarded weight.

We stress the qualitative upgrade relative to the cross-click analyses of Refs.~\cite{Zhang2017,Zhang2021,YeThesis}. First, \emph{rigor}: Eqs.~\eqref{eq:qubitweight}--\eqref{eq:generalweight} rest on Theorems~\ref{thm:mono}--\ref{thm:joint}, which hold for all $n$; in those works, monotonicity beyond the numerically checked window ($n\le20$ in Ref.~\cite{Zhang2017}, $n\le60$ in Ref.~\cite{YeThesis}) was an extrapolation, so the resulting statements were, strictly speaking, conditional. Second, \emph{tightness control}: the two-sided bound \eqref{eq:twosided} shows that the closed-form estimate \eqref{eq:f2bound} is at worst a factor $m$ off in $1-f$, and the exact $f^{(2)}$ is available from a $3\times3$ diagonalization. Third, \emph{cost}: no Fock-space construction is needed at all; where sector-wise exact values are nevertheless desired (e.g., to tighten $f^{(N+1)}$ for large $N$), Sec.~\ref{sec:numerics} provides a polynomial-time per-sector algorithm, replacing the exponentially costly naive constructions.

The relation to the flag-state squashing bounds of Ref.~\cite{Kamin2024} is complementary. That work derives analytic \emph{one-sided} lower bounds on the probability of detectable events for arbitrary passive setups, which is what prepare-and-measure subspace estimation requires. The present results deliver, for the specific but widely used cross-click event, the exact minimum eigenvalue structure: monotonicity of $f^{(n)}$ itself (the quantity constrained in the entanglement-based analyses \cite{Zhang2017,YeThesis}), matching upper bounds that certify how much room for improvement remains, and the exact bipartite factorization \eqref{eq:factorization}, which has no analog in the single-receiver setting.

\section{Numerical illustration and cross-checks}\label{sec:numerics}

\begin{figure*}[tb]
\includegraphics[width=0.98\textwidth]{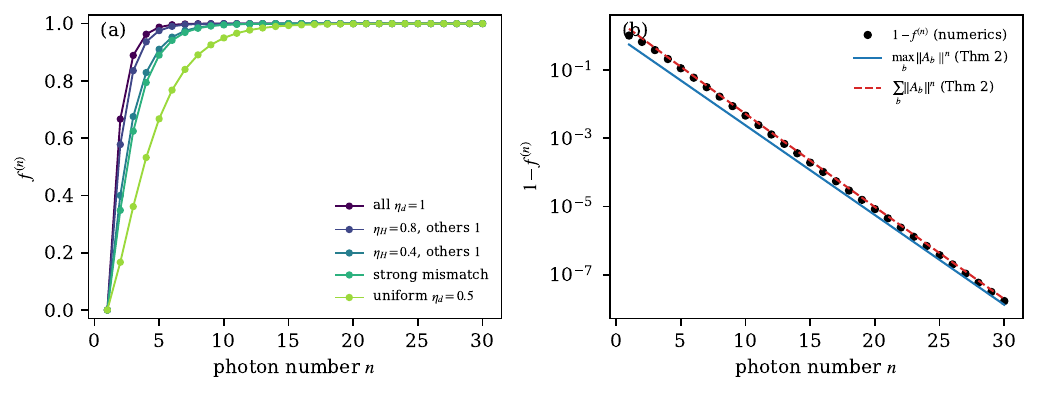}
\caption{(a) Minimum eigenvalue $f^{(n)}$ of the cross-click operator of the balanced passive six-state analyzer on the $n$-photon sector, for several detector-efficiency configurations, computed exactly by the symmetric-power algorithm of Sec.~\ref{sec:numerics}. All curves are monotone, as guaranteed by Theorem~\ref{thm:mono}; for ideal detectors the curve is exactly $f^{(n)}=1-3^{1-n}$ (Corollary~\ref{cor:ideal}). (b) Convergence $1-f^{(n)}$ for the strongly mismatched configuration $\vec\eta=(0.9,0.75,1.0,0.6,0.85,0.7)$ on a logarithmic scale, together with the two-sided bounds of Theorem~\ref{thm:rate}. The exact values are squeezed between $\max_b\lVert A_b\rVert^n$ and $\sum_b\lVert A_b\rVert^n$ and follow the closed-form rate $\max_b\lVert A_b\rVert=0.5462$ [Eq.~\eqref{eq:sixrate}].}
\label{fig:fn}
\end{figure*}

\begin{figure}[tb]
\includegraphics[width=0.95\columnwidth]{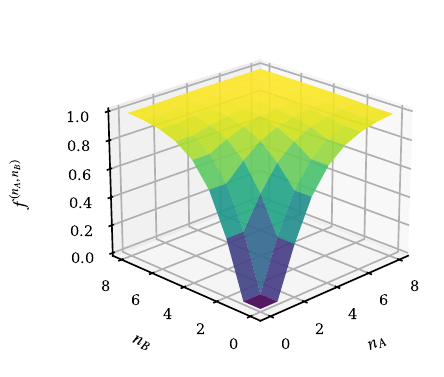}
\caption{Joint minimum eigenvalue $f^{(n_A,n_B)}$ for two different mismatched six-state analyzers, evaluated from the exact factorization of Theorem~\ref{thm:joint}, $1-f^{(n_A,n_B)}=(1-f_A^{(n_A)})(1-f_B^{(n_B)})$. The surface reproduces the structure obtained by direct numerical diagonalization in Ref.~\cite{YeThesis}, including the exact coincidence of the $n_B=0$ and $n_B=1$ slices.}
\label{fig:joint}
\end{figure}

Although the theorems remove the need for numerics in applications, numerical evaluation remains useful for illustration and for validating the formalism. Lemma~\ref{lem:gamma} itself yields an efficient algorithm: on $\Sym^n(\mathbb{C}^M)$, the block $A^{\otimes n}|_{\Sym^n}$ is the $n$-th symmetric power of an $M\times M$ matrix, computable directly in the occupation-number basis---for $M=2$, an $(n{+}1)\times(n{+}1)$ matrix obtained from the action of $A$ on homogeneous polynomials of degree $n$---at cost polynomial in $n$. This replaces the direct Fock-space construction of the POVM elements from products of creation operators, whose cost grows exponentially with $n$ in the naive labeled-mode expansion (the sector dimension itself grows only polynomially at fixed mode number) and which required dedicated caching strategies to reach $n=60$ in Ref.~\cite{YeThesis}; the symmetric-power route evaluates thousands of sectors in seconds.

We validated the formalism in two independent ways. First, for $n\le4$ and strongly mismatched efficiencies we constructed the silence operators $Q_b^{(n)}$ by brute force in the six-mode output Fock space---expanding $n$-photon input monomials through the network isometry and applying the diagonal no-click weights $\prod_d(1-\eta_d)^{n_d}$---and verified that their spectra coincide (to machine precision) with those of $\gamma_bA_b^{\otimes n}|_{\Sym^n}$ from Lemma~\ref{lem:gamma}. Second, we verified numerically, for the same configurations, the key operator inequality \eqref{eq:keyineq}, the two-sided bounds \eqref{eq:twosided}, the ideal-detector formula \eqref{eq:ideal}, the closed-form rate \eqref{eq:sixrate}, the factorization \eqref{eq:factorization}, and monotonicity in the presence of dark counts.

Figure~\ref{fig:fn}(a) shows $f^{(n)}$ for the six-state analyzer under several efficiency configurations, including the strongly mismatched set $\vec\eta=(\eta_H,\eta_V,\eta_D,\eta_A,\eta_R,\eta_L)=(0.9,0.75,1.0,0.6,0.85,0.7)$. Figure~\ref{fig:fn}(b) displays $1-f^{(n)}$ on a logarithmic scale together with the bounds of Theorem~\ref{thm:rate}; the exact values track the closed-form rate $\max_b\lVert A_b\rVert=0.5462$ [Eq.~\eqref{eq:sixrate} gives $\lVert A_{\mathbb Z}\rVert=0.5462$, $\lVert A_{\mathbb X}\rVert=0.5020$, $\lVert A_{\mathbb Y}\rVert=0.5295$ for this configuration]. Figure~\ref{fig:joint} shows the joint surface $f^{(n_A,n_B)}$ for two different analyzers, evaluated from the factorization \eqref{eq:factorization}; it reproduces the structure previously obtained by two-dimensional numerical scans, now as an exact formula.

\section{Discussion}\label{sec:discussion}

We have given a comprehensive analytic account of the spectral growth of cross-click operators for passive multi-basis photodetection with threshold detectors: monotonicity for all photon numbers, two-sided exponential bounds with a closed-form rate, exact values for ideal detectors, and an exact factorization for multipartite joint operators. The results hold for arbitrary passive analyzers---any number of input modes, bases, and detectors, arbitrary efficiency mismatch, internal loss, and dark counts---and thereby cover, beyond the polarization qubit receivers that motivated the question \cite{Zhang2017,Zhang2021,YeThesis}, time-bin and spatial-mode analyzers of arbitrary dimension.

Conceptually, the analysis rests on a single structural fact (Lemma~\ref{lem:gamma}): silence operators of threshold detectors behind passive linear optics are second quantizations of explicit single-photon contractions. This ``one contraction per silence event'' viewpoint may prove useful beyond the present problem, e.g., in the analysis of double-click operators for squashing models \cite{Beaudry2008,Gittsovich2014} and of the detectable-event bounds underlying flag-state squashing \cite{LiLutkenhaus2020,Kamin2024}, in detector self-characterization, and in photon-number-resolved state reconstruction, where the same tensor-power structure controls the sector-wise spectra.

For security proofs, the practical consequences are threefold: the photon-number weight bounds that feed dimension-reduction and numerical key-rate methods \cite{Coles2016,Winick2018,Upadhyaya2021} become explicit and rigorous for all $n$; the exact convergence rate quantifies how efficiency mismatch degrades the certification power of cross clicks [Eq.~\eqref{eq:sixrate}]; and the joint two-party analysis collapses to single-party curves. In finite-key analyses \cite{Tomamichel2012}, where $q_C$ is estimated statistically, the same monotone bounds apply verbatim to the estimated value with its confidence interval.

Several extensions merit further work. First, our detector model assumes independent dark counts and uncorrelated efficiencies; afterpulsing and other temporally correlated noise break the product structure of Lemma~\ref{lem:gamma}, and it would be interesting to identify the weakest noise assumptions under which monotonicity survives. Second, the lower bound in Theorem~\ref{thm:rate} is witnessed by the symmetric product states $u_b^{\otimes n}$; characterizing the exact minimizers of $F_C^{(n)}$ at finite $n$, and the finite-$n$ tightness of the explicit bounds, would sharpen the weight bounds further. Third, the same technique should yield analytic monotonicity statements for the double-click and effective-error quantities of Ref.~\cite{Zhang2017} (for which the PPT constraint is active), for basis-dependent (active-choice) receivers, and for cross-click-type operators in measurement-device-independent settings \cite{Lo2012,Braunstein2012}. Finally, it is natural to ask for the optimal certified truncation $(N_A,N_B)$ as a function of observed data and target key rate, now that all ingredients of Eq.~\eqref{eq:generalweight} are analytic.

\begin{acknowledgments}
The problem treated here was first encountered in the four--six-state experiment of Ref.~\cite{Tannous2019} and formulated in the author's doctoral work, completed in 2020 and reported in Ref.~\cite{YeThesis}, where the properties proved here were verified numerically. The author thanks Norbert L\"utkenhaus for earlier discussions at the Institute for Quantum Computing (IQC) and for supervision of the extension work of Ref.~\cite{Tannous2019}, and colleagues at IQC, in particular Ashutosh Marwah, Jie Lin, and Nicky Kai Hong Li, and at the Department of Physics, Tsinghua University, for discussions and help.

\end{acknowledgments}

\section*{Statement on the use of AI tools}
Large language models were used substantially in the preparation of this work, including the assistance with the analytic proofs, the verification code, the figures, and the manuscript text; a second model was used for additional checking of the mathematics and of the literature. The scripts that reproduce all numerical checks and figures are provided as Supplemental Material.

\appendix

\section{Details on Lemma~\ref{lem:gamma}}\label{app:lemma}

We record the two standard identities used in the proof of Lemma~\ref{lem:gamma}, for completeness.

(i) \emph{No-click operator of a lossy threshold detector.} A threshold detector of unit efficiency monitoring a mode with number operator $\hat n$ reports ``no click'' on the vacuum only; preceded by a beam splitter of transmittance $\eta$ (the standard efficiency model), the no-click probability on a Fock state $|k\rangle$ is $(1-\eta)^k$, since each photon is independently reflected into the loss port with probability $1-\eta$ and detected otherwise. Because the operator is diagonal in the Fock basis with these eigenvalues, it equals $(1-\eta)^{\hat n}$. An independent dark count with probability $\varepsilon$ multiplies the no-click operator by the scalar $(1-\varepsilon)$.

(ii) \emph{Products of number-diagonal attenuations are second quantizations.} For $x_d\in[0,1]$,
\begin{equation}
  \prod_d x_d^{\hat n_d}=\Gamma\Big(\bigoplus_d x_d\Big),
\end{equation}
because both sides are diagonal in the multimode Fock basis with eigenvalue $\prod_d x_d^{k_d}$ on $|k_1,k_2,\dots\rangle$, which is precisely the action of $(\bigoplus_dx_d)^{\otimes n}$ on the corresponding symmetrized $n$-photon vector, $n=\sum_dk_d$.

Finally, the pull-back step uses the identity
\begin{equation}\label{eq:pullback}
  \Gamma(\widetilde V)^\dagger\,\Gamma_{\rm out}(D)\,\Gamma(\widetilde V)
  \;=\;\Gamma\big(\widetilde V^\dagger D\widetilde V\big),
\end{equation}
which on each $n$-photon sector reads $(\widetilde V^\dagger)^{\otimes n}D^{\otimes n}\widetilde V^{\otimes n}=(\widetilde V^\dagger D\widetilde V)^{\otimes n}$ and holds because tensor products factor; this is all that enters POVM elements on the input space. [We caution that the stronger intertwining $\Gamma_{\rm out}(D)\Gamma(\widetilde V)=\Gamma(\widetilde V)\Gamma(\widetilde V^\dagger D\widetilde V)$ is \emph{false} in general---already at $n=1$ it would require $D$ to preserve the range of $\widetilde V$---and is not used.]

\section{Small-sector values and strictness}\label{app:smalln}

\emph{Values at $n=0,1$.} Without dark counts, $F_C^{(0)}=0$ and $F_C^{(1)}=0$: no photon or one photon cannot produce clicks in two bases. This is consistent with Eq.~\eqref{eq:NCn}: at $n=1$, Eq.~\eqref{eq:AbRb} gives $N_C^{(1)}=\sum_bA_b-(m-1)A_\emptyset=A_\emptyset+\sum_bR_b=\id$. With dark counts, $1-f^{(0)}=\sum_b\gamma_b-(m-1)\gamma_\emptyset$, and $f^{(0)}>0$ precisely when detectors in at least \emph{two} different bases have nonzero dark-count probability---dark counts confined to a single basis cannot produce a cross click on the vacuum.

\emph{Exact $f^{(2)}$ for $M=2$.} By Eq.~\eqref{eq:NCn}, $N_C^{(2)}$ is the restriction of $\sum_b\gamma_bA_b^{\otimes2}-(m-1)\gamma_\emptyset A_\emptyset^{\otimes2}$ to the three-dimensional symmetric subspace of $\mathbb{C}^2\otimes\mathbb{C}^2$; $f^{(2)}=1-\eigmax(N_C^{(2)})$, where $\eigmax(N_C^{(2)})$ is the largest root of an explicit cubic and is available in closed (if unwieldy) form; the bound \eqref{eq:f2bound} is its simple two-line surrogate.

\emph{Strictness.} If the coverage condition \eqref{eq:coverage} holds, then $\lVert A_b\rVert<1$ for all $b$ and the upper bound in Eq.~\eqref{eq:twosided} decays exponentially, so $f^{(n)}<f^{(n')}$ for all sufficiently separated $n<n'$. Strict positivity of $f^{(2)}$ requires more than $\eta_d>0$ for arbitrary analyzers: if some input polarization is routed exclusively to the detectors of a single basis, a two-photon state in that polarization never cross-clicks and $f^{(2)}=0$, even though all efficiencies are positive. For \emph{splitter-type} analyzers---those whose first stage is a state-independent splitter directing the light with probabilities $p_b$ into arms that each implement a complete measurement on $\mathfrak h$, as in Fig.~\ref{fig:setup}---strict positivity is quantitative:
\begin{equation}\label{eq:f2strict}
  f^{(2)}\;\ge\;2\Big(\sum_{b<b'}p_b\,p_{b'}\Big)\,\eta_{\min}^2\;>\;0,
  \qquad \eta_{\min}:=\min_d\eta_d .
\end{equation}
Indeed, because the splitter is state independent, the arm occupation of the two photons is multinomial with probabilities $\{p_b\}$ regardless of the (possibly entangled) polarization state; conditioned on the photons occupying two distinct arms $b\neq b'$, each arm holds exactly one photon and the joint click operator satisfies $\big(\sum_{d\in\mathbb B_b}\eta_d\Pi_d\big)\otimes\big(\sum_{d\in\mathbb B_{b'}}\eta_d\Pi_d\big)\succeq\eta_{\min}^2\,\id\otimes\id$ since each factor sums over a complete basis; and simultaneous clicks in two arms constitute a cross click. For the balanced six-state analyzer, Eq.~\eqref{eq:f2strict} gives $f^{(2)}\ge\tfrac23\eta_{\min}^2$; numerically we find this bound to be \emph{tight} (attained with equality) for uniform efficiencies $\eta_d\equiv\eta$, consistent with Corollary~\ref{cor:ideal} at $\eta=1$. We do not pursue strict monotonicity at every step, which plays no role in applications.

\section{Proof of Corollary~\ref{cor:sixstate}}\label{app:sixstate}

For the balanced six-state analyzer, $v_d=\tfrac{1}{\sqrt3}|e_d\rangle$ and the two projectors of basis $b'$ satisfy $|e_{b'1}\rangle\langle e_{b'1}|+|e_{b'2}\rangle\langle e_{b'2}|=\id$ and $|e_{b'1}\rangle\langle e_{b'1}|-|e_{b'2}\rangle\langle e_{b'2}|=\sigma_{b'}$, where $\sigma_{\mathbb Z},\sigma_{\mathbb X},\sigma_{\mathbb Y}$ are the three Pauli operators. Hence
\begin{equation}
  R_{b'}=\frac{1}{3}\Big(\bar\eta_{b'}\,\id+\frac{\eta_{b'1}-\eta_{b'2}}{2}\,\sigma_{b'}\Big),
\end{equation}
and, for fixed $b$,
\begin{equation}
  \sum_{b'\neq b}R_{b'}
  =\frac{1}{3}\sum_{b'\neq b}\bar\eta_{b'}\,\id
  +\sum_{b'\neq b}\frac{\eta_{b'1}-\eta_{b'2}}{6}\,\sigma_{b'} .
\end{equation}
The Pauli axes appearing in the sum are orthogonal, so the traceless part has Bloch length $\frac16[\sum_{b'\neq b}(\eta_{b'1}-\eta_{b'2})^2]^{1/2}$, giving
\begin{equation}
  \eigmin\Big(\sum_{b'\neq b}R_{b'}\Big)
  =\frac{1}{3}\sum_{b'\neq b}\bar\eta_{b'}
  -\frac{1}{6}\sqrt{\sum_{b'\neq b}(\eta_{b'1}-\eta_{b'2})^2},
\end{equation}
and Eq.~\eqref{eq:sixrate} follows from Eq.~\eqref{eq:normAb}. For the configuration of Fig.~\ref{fig:fn}(b), $\vec\eta=(0.9,0.75,1.0,0.6,0.85,0.7)$, one finds $\lVert A_{\mathbb Z}\rVert=0.5462$, $\lVert A_{\mathbb X}\rVert=0.5020$, $\lVert A_{\mathbb Y}\rVert=0.5295$, in agreement with direct diagonalization.

\end{document}